\newif\ifFull
\definecolor {infocolor} {rgb} {0.6,0.6,0.6}
\definecolor {sepia} {rgb} {0.75,0.30,0.15}
\definecolor {remarkcolor} {rgb} {0.60,0.30,0.30}
\newcommand {\fsg} [1] {\ensuremath {G \langle #1 \rangle}}
\newtheorem {observation} {Observation}
\newtheorem {lemma} {Lemma}
\newtheorem {theorem} {Theorem}
\newtheorem {corollary} {Corollary}
\renewcommand{\subsection}[1]{\subsubsection{#1.}}
\newcommand {\plaatjesh} [3]
{
  \begin {figure} [h]
    \centering
    #1
    \hfill
    \mbox {}
    \textsf{\caption {\small #2}
    \label {fig:#3}}
  \end {figure}
}
\newcommand {\drieplaatjesh} [5] []
{
  \plaatjesh
  {
    \subplaatje {#1} {#2}
    \subplaatje {#1} {#3}
    \subplaatje {#1} {#4}
  } {#5} {#2+#3+#4}
}
\title{Drawing Graphs in the Plane \\ with
a Prescribed Outer Face and Polynomial Area}
\author{
  Erin W. Chambers\footnote
  { Dept. of Math and Computer Science, Saint Louis Univ., USA.
    \texttt{echambe5(at)slu.edu}
  }
  \and
  David Eppstein\footnote
  {
    Computer Science Dept., University of California, Irvine, USA.
    \texttt{\{eppstein,goodrich,loffler\}(at)ics.uci.edu}
  }
  \and
  Michael T. Goodrich\footnotemark [2]
  \and
  Maarten L\"offler\footnotemark [2]
}
\date{ }
\begin{document}

\maketitle

\begin{abstract}
We study the classic graph drawing problem of drawing a planar graph
using straight-line edges with a prescribed convex polygon
as the outer face. Unlike previous algorithms for this
problem, which may produce drawings with exponential area,
our method produces drawings with polynomial area.
In addition, we allow for collinear points on the boundary, provided
such vertices do not create overlapping edges.
Thus, we solve an open problem of Duncan {\it et al.}, which,
when combined with their work, implies that we can produce
a planar straight-line drawing of a combinatorially-embedded genus-$g$ graph
with the graph's canonical polygonal schema drawn as a convex polygonal
external face.
\end{abstract}

\section{Introduction}
\label{sec:intro}
The study of planar graphs has been a driving force for graph theory,
graph algorithms, and graph drawing.
Our interest, in this paper, is on methods for drawing planar graphs
without edge crossings
using straight line segments for edges, in such a way that all faces are convex
polygons and the outer face is a given shape.
Figure~\ref{fig:example} shows an example.

\drieplaatjesh {intro-input-G} {intro-input-P} {intro-output} {Our problem: given (a) a combinatorially embedded planar graph $G$ and (b)~a polygon $P$ with certain vertices on the outer face of $G$ marked as corresponding to vertices of $P$, find (c)~a straight-line embedding of $G$ that uses $P$ as the shape of its outer face.\label{fig:example}}

\subsection{Related Prior Work}
In seminal work that has been highly influential in the graph drawing
literature, Tutte~\cite{t-crg-60,t-hdg-63}
shows that it is possible to draw any planar graph, $G$, using
non-crossing straight-line edges so that the vertices of the
outer face are drawn on the boundary of a prescribed convex polygon.
This work has influenced a host of subsequent papers, and,
according to Google Scholar, Tutte's 1963 paper
has been directly cited over 600 times.
For instance, along with seminal work on drawing maximal planar
graphs~\cite{f-slrpg-48,s-cm-51,w-bzv-36} (with triangular outer faces),
his work has influenced many researchers
to study methods for
drawing planar graphs using straight-line edges (e.g.,
see~\cite{cgt-cdgtt-96,ck-cgd3c-97,cp-ltadp-95,%
dpp,d-gdt-10,k-dpguc-96,br-scdpg-06,s-epgg-90}).
Moreover, not only has Tutte's result itself been highly influential,
but because his method is based on a force-directed layout method,
his work has also influenced a considerable amount of work on
force-directed layouts
(e.g., see~\cite{dh-dgnus-96,dett-gd-99,fr-gdfdp-91,ggk-mdafd-04,sm-gdmsm-95}).

Unfortunately,
one of the drawbacks of Tutte's algorithm is that it
can result in drawings with exponential area.
This area blowup is not an inherent requirement for planar
straight-line drawings, however, as
a number of researchers have shown that it is possible to
produce drawings of planar graphs with non-crossing straight-line
edges using polynomial area
(e.g.,~\cite{ck-cgd3c-97,cp-ltadp-95,dpp,k-dpguc-96,s-epgg-90}).
Nevertheless, all of these straight-line drawing algorithms lose a
critical feature of Tutte's drawing algorithm, in that none of them
allow for the vertices of a planar graph's outer
face to be placed on the boundary of a prescribed
convex polygon.
Becker and Hotz~\cite{bh-olpgf-87}, on the other hand,
show how to draw a planar graph with
a prescribed outer face so as to optimize the total weighted edge
length, but, like Tutte's method,
their method may also produce drawings with exponential
area.
Indeed, we know of no such prior result, and, in fact,
Duncan {\it et al.}~\cite{duncan}
pose as an open problem whether there exists an algorithm
that produces straight-line drawings with vertices on the boundary
of a given convex polygon of polynomial area.

One motivation for this problem of prescribing the outer face of a planar drawing
comes from a common way of producing hand-drawn
planar representations of genus-$g$ graphs.
Namely, if we are given
a graph, $G$, embedded into a
genus-$g$ topological surface,
the surface may be cut along
the edges and vertices of $2g$ fundamental cycles in $G$ to form a topological disk (known as a \emph{canonical polygonal schema}), with a boundary that is made up
of $4g$ paths (with multiple copies of the vertices on the fundamental cycles).
Moreover, as shown by
Duncan {\it et al.}~\cite{duncan},
$G$ can be cut in this way so that
each of these $4g$ paths is chord-free, that is,
so that there are no edges
between two vertices strictly internal to the same path (other than
path edges themselves).

The standard way of drawing this unfolded version of
such an embedding, in the topology
literature (e.g., see~\cite{lpvv-ccpso-01}),
is to draw the disk as a convex polygon with each of its $4g$ boundary paths drawn as a straight line
segment: the geometric shape 
is used to make clear the pattern in which the surface was cut to form a disk.  Fortunately, given Tutte's seminal result, it is
possible to draw any chord-free canonical polygonal schema along the
boundary of a given convex polygon with $4g$ edges.
The drawback of using Tutte's algorithm for this purpose 
is that the resulting drawing may have exponential area.
Thus, we are interested in drawing the unfolded embedding in
polynomial area and in polynomial time.

\subsection{Our Results}
In this paper, we describe an algorithm for drawing a planar graph with a prescribed outer face shape.
The input consists of an embedded planar graph $G$, a partition of the outer face of the embedding into a set $\cal S$ of $k$ chord-free paths, and a $k$-sided polygon $P$; the output of our algorithm is a drawing of $G$ within $P$ with each path in $\cal S$ drawn along an edge of
$P$.
Given the above-mentioned prior result of Duncan {\it et al.}~\cite{duncan},
for finding chord-free canonical polygonal schemas,
our result implies that we can solve their open problem: any
graph $G$ combinatorially embedded
in a genus-$g$ surface has a polynomial-area straight-line planar drawing
of a canonical polygonal schema $S$ for $G$,
drawn as a $4g$-sided convex polygon $P$ with
the vertices of each path in $S$ drawn along
an edge of $P$.

\section {Preliminaries} \label {sec:prelims}

  In this paper, we show how to draw a graph with a given boundary with
  coordinates of polynomial magnitude.
  Before treating the main construction,
  though, we show in this section that
  we can equivalently state the problem
  in terms of the \emph{resolution} of the
  graph. Furthermore, we recall some known results and concepts.

  \subsection {Resolution}

    Instead of drawing a graph with integer coordinates of small total size, we
    will make a drawing with real coordinates that stays within a fixed region
    (inside the input polygon) with a large \emph {resolution}.

    Let $G$ be a graph that is embedded in $\R^2$ with straight line segments as
    edges. We define the \emph {resolution} of $G$ to be the shortest distance
    between either two vertices of the graph, or between a vertex and a
    non-incident edge.
    The \emph {diameter} of $G$ is the largest distance between two vertices of the graph.  
    
    
    We begin by establishing a relation between resolution and size, which basically says that drawing a graph $G$ with small diameter and large resolution also results in another drawing with integer coordinates and small size.
    Generally it may not be possible to scale a given input polygon such that its coordinates become integers, so we need to do some rounding. We say that two drawings of $G$ are \emph {combinatorially equivalent} if their topology is the same, and any collinear adjacent edges in one drawing are also collinear in the other.    
    We say two drawings are \emph {$\varepsilon$-equivalent} if the distance between the locations of each vertex of $G$ in the two drawings is at most $\varepsilon$.

    \begin {lemma} \label {lem:ressize}
      Let $G$ be a graph, and let $\Gamma$ be a drawing of $G$ without crossings, with constant resolution, and with diameter $D$. 
      Then there exists another drawing $\Gamma'$ of $G$ with integer coordinates and diameter $O(D^2)$, such that a scaled copy of $\Gamma'$ with diameter $D$ is both combinatorially equivalent and $O(1)$-equivalent to $\Gamma$.
    \end {lemma}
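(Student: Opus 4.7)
The plan is to scale $\Gamma$ by a factor $M=cD$, for a sufficiently large constant $c$, and then round each vertex to a carefully chosen nearby integer point. In the scaled drawing $M\Gamma$ the resolution becomes $\Omega(D)$ and the diameter becomes $O(D^{2})$, so if every vertex can be displaced by at most $O(D)$ during the rounding, the resulting integer drawing $\Gamma'$ will have diameter $O(D^{2})$, and rescaling $\Gamma'$ down to diameter $D$ will move each vertex by only $O(1)$ from its position in $\Gamma$, as required.

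First I would partition the vertices of $G$ into \emph{chain vertices} --- those internal to some maximal path whose consecutive edges are pairwise collinear --- and \emph{free vertices} --- all other vertices, in particular including the chain endpoints. Each free vertex is simply rounded to its nearest integer point, moving it by at most $\sqrt{2}/2$. For each collinear chain with integer endpoint $u'$, provisional rounded second endpoint $\tilde v'$, and $k-1$ interior chain vertices, I would pick an integer $G=\Theta(D)$ and redefine the second endpoint to be $v' = u' + G\vec{\delta}$, where $\vec{\delta}$ is the integer vector nearest $(\tilde v' - u')/G$. Both coordinates of $v'-u'$ are then multiples of $G$, so the segment $u'v'$ contains at least $G+1 = \Omega(D)$ evenly spaced integer lattice points. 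With $c$ chosen large enough, the lattice spacing on $u'v'$ is at most half the minimum spacing between consecutive chain vertices in $M\Gamma$, so I can snap each interior chain vertex to a distinct nearby lattice point on $u'v'$ in the correct order along the chain.

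To finish, I would verify the displacements and the preservation of combinatorial structure. Free vertices move by $O(1)$; the adjusted endpoint $v'$ moves by at most $G\sqrt{2}/2 + O(1) = O(D)$; and each interior chain vertex moves by at most the perpendicular deviation from the new line $u'v'$ (bounded by the endpoint displacements, $O(D)$) plus half the lattice spacing ($O(D)$). Because every displacement is $O(D)$ while the scaled resolution is $\Omega(cD)$, taking $c$ large enough prevents any vertex--vertex or vertex--edge distance from collapsing, so the topology of $\Gamma'$ matches that of $\Gamma$; chain collinearities are preserved by construction. The hard part of the plan is the chain step: I have to force all interior vertices of a chain onto a single integer line while keeping each near its scaled original position, even when the interior vertices are distributed non-uniformly along the chain. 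The trick of making $\gcd(v'-u')$ a multiple of $G=\Theta(D)$ generates enough lattice points on $u'v'$ to absorb arbitrary non-uniformities, and this extra factor of $\Theta(D)$ on top of the base $\Theta(D)$ scaling is precisely what drives the diameter bound up to $\Theta(D^{2})$ rather than $\Theta(D)$.
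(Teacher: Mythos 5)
Your proposal is correct and rests on the same key idea as the paper's proof: spend an extra factor of $\Theta(D)$ of scaling, beyond what is needed to make the resolution large, so that the line through two rounded chain endpoints carries integer lattice points at spacing $O(D)$ --- i.e.\ $O(1)$ after rescaling back to diameter $D$ --- onto which the interior collinear vertices can be snapped. The mechanism differs: the paper first scales by $3$ and rounds the outer-face corners to integer points, then scales everything by $\lceil 3D\rceil$, which makes \emph{every} pairwise difference of those corners a multiple of $\lceil 3D\rceil$ simultaneously and hence guarantees an integer point of each side inside any $3D\times 3D$ box; you instead adjust one endpoint of each chain so that the difference vector is an explicit multiple of $G=\Theta(D)$. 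Your version also treats arbitrary maximal collinear chains rather than only the sides of the outer face, which is if anything more faithful to the paper's definition of combinatorial equivalence. The one wrinkle is that a vertex can be the endpoint of more than one chain (every corner of the polygon is the endpoint of two sides), so the per-chain rule ``keep $u'$, move $v'$'' is under-specified and, applied naively, can impose conflicting adjustments on a shared endpoint. You must either process the chains along a spanning forest of the chain-endpoint incidence graph (the remaining constraints then hold automatically, since differences lying in $G\mathbb{Z}^2$ telescope around cycles), or, more simply, round every chain endpoint to the nearest point of $G\mathbb{Z}^2$ --- which is in effect exactly what the paper's two-stage scaling accomplishes. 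With that repair your displacement and resolution accounting goes through as stated.
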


    \begin {proof}
      We may assume that the resolution of the initial drawing of $G$ is $1$:
      that is, no vertex is within unit distance of another vertex or edge.
      Scale $G$ by a factor of $3$,
      forming a drawing, $G'$, and consider the integer grid squares
      that contain each vertex of $G'$.
      If any set of vertices has its coordinates changed by at most one unit,
      this motion can only bring a vertex
      and an edge closer together by a distance of $2\sqrt 2$,
      less than the resolution (which is now $3$),
      so this motion cannot introduce crossings or change
      the combinatorial type of the drawing.
      At this stage, we move the vertices of the outer face to their
      nearest integer points (changing their coordinates by at most $1/2$),
      but we do not change the positions of the other vertices.

      Next, we scale the drawing again, by a factor of $\lceil 3D\rceil$. The vertices of the outer face remain on integer coordinates. The vertices in the interior of the drawing may be moved to any nearby integer point, without changing the combinatorial type of the drawing. It remains to choose integer coordinates for the vertices that lie on the sides of the outer face of $G$. For each such vertex $v$, on side $s$, the integer rounding of the endpoints of $s$ will have caused $s$ to move, but there will still exist a point $v'\in s$ whose coordinates (though not necessarily integers) are both within $3D/2$ of $v$. A $3D\times 3D$ box centered on $v'$ will consist entirely of points whose coordinates are within $3D$ of $v$ (equivalent to being within one unit of $v$ in $G'$), and (because of the way we rounded the endpoints of $s$ prior to the second scaling step) is guaranteed to contain an integer point $v''\in s$. By rounding $v$ to $v''$, and simultaneously rounding in the same way all the other points of the drawing that belong to the sides of the outer face, we obtain a drawing $G''$ that is combinatorially equivalent to $G$, with a combinatorially equivalent outer face, in which all coordinates are integers.
    \end {proof}

    Note that, for a fixed input polygon with non-integer vertex coordinates, this
    perturbation may slightly modify its shape, since it may not be possible to
    find a similar copy of the polygon with integer vertex coordinates.

    Now, let $Q$ be a set of points in the plane. We define the \emph {potential
    resolution} of $Q$ to be the resolution of the complete graph on $Q$.
    Similarly, for a polygon $P$, we define its potential resolution to be the
    potential resolution of its set of vertices.
    Clearly, the resolution of any drawing we can achieve will depend on the
    potential resolution of the input polygon, because the drawing could be
    forced to include any edge of the complete   graph.

    Next, we make an observation about the nature of the potential resolution of
    convex polygons.

    \begin {observation} \label {obs:convex-easier}
      If $P$ is a convex polygon, then the potential resolution of $P$ is the
      minimum over the vertices of $P$ of the distance between that vertex and the
      line through its two neighboring vertices.
    \end {observation}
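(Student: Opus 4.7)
The plan is to prove $R = M$ by establishing both inequalities, where $R$ denotes the potential resolution of $P$ and $M := \min_v d(v, \ell_v)$, with $\ell_v$ the line through the two polygon-neighbors of $v$.

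For the easy direction $R \geq M$, my key geometric input is that convexity of $P$ implies that for each vertex $v$ with neighbors $u, w$, the chord $\overline{uw}$ separates $v$ from every other vertex of $P$; equivalently, all vertices other than $v$ lie in the closed half-plane $H_v$ bounded by $\ell_v$ and not containing $v$. It follows that for any non-incident pair $(v, \overline{xy})$ in the complete graph on $P$'s vertex set, the segment $\overline{xy}$ lies in $H_v$, so $\operatorname{dist}(v, \overline{xy}) \geq d(v, \ell_v)$, and likewise for vertex-vertex pairs. Minimizing over $v$ yields $R \geq M$.

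For the reverse inequality $R \leq M$, let $v^*$ achieve the minimum in $M$, with neighbors $u^*, w^*$. If the foot of perpendicular from $v^*$ onto $\ell_{v^*}$ lies on the segment $\overline{u^* w^*}$, then $\operatorname{dist}(v^*, \overline{u^* w^*}) = d(v^*, \ell_{v^*}) = M$, delivering $R \leq M$ immediately. The main obstacle, and where I expect the bulk of the work, is ruling out the case where the foot lies outside $\overline{u^* w^*}$.

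To handle that case I will argue by contradiction: if the foot lies, say, past $w^*$, and $w'$ denotes the other polygon-neighbor of $w^*$, I claim $d(w^*, \ell_{w^*}) < d(v^*, \ell_{v^*})$, contradicting the choice of $v^*$. Placing coordinates so that $u^* = (0,0)$, $w^* = (a, 0)$, and $v^* = (v_x, v_y)$ with $v_y > 0$ and $v_x > a$ (which encodes ``foot past $w^*$''), global convexity of $P$ forces $w'$ into the half-plane $y \leq 0$ while the convex-turn condition at $w^*$ pins down the sign of the cross product of $w^* - v^*$ and $w' - v^*$. Writing $w' - v^* = (p, q)$, I expect these two conditions to force $p < 0$ and $q \leq -v_y$; the desired inequality $d(w^*, \ell_{w^*}) < v_y$ then reduces to the elementary estimate $v_y\bigl(\sqrt{p^2+q^2} - |p|\bigr) + (v_x - a)|q| > 0$, which is immediate since both summands are nonnegative and the second is strictly positive. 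This contradiction closes the second direction and completes the proof.
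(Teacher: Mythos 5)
The paper offers no proof of this Observation at all---it is stated as self-evident and used as a black box in Lemma~\ref{lem:move-vertex}---so there is nothing to compare approaches against; what matters is whether your argument is sound, and it is. The direction $R \geq M$ is exactly the right use of convexity: every vertex other than $v$ lies in the closed half-plane beyond the chord through $v$'s neighbors, so every non-incident segment of the complete graph does too, and every vertex--vertex and vertex--edge distance is at least some $d(x,\ell_x) \geq M$. For $R \leq M$ your case split is the real content, since the foot of the perpendicular from the minimizing vertex genuinely can fall outside $\overline{u^*w^*}$ for non-minimizing vertices (e.g.\ the apex of a thin obtuse triangle), and your contradiction argument correctly shows this cannot happen at the minimizer. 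I verified the computation you sketch: with your coordinates the convex-turn condition at $w^*$ says exactly that $E := (a-v_x)q + v_y p < 0$, the half-plane condition gives $q \leq -v_y < 0$ (which together with $v_x > a$ does force $p<0$, as you predicted), and the target inequality $-E < v_y\sqrt{p^2+q^2}$ rearranges to $v_y(\sqrt{p^2+q^2}+p) + (v_x-a)\lvert q\rvert > 0$, whose second term is strictly positive; note that $\sqrt{p^2+q^2}+p \geq \sqrt{p^2+q^2}-\lvert p\rvert \geq 0$ holds for either sign of $p$, so the sign of $p$ is not even needed. The only cosmetic caveat is that phrases like ``I expect these conditions to force'' should be replaced by the verified claims in a final write-up; as a proof of the paper's unproved observation, your argument is complete and correct.
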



    Thus, for a convex polygon $P$ to allow for a drawing of polynomial
    area in its interior, we insist that
    $P$ has a polynomially-bounded aspect ratio. It cannot be
    arbitrarily thin and still support a polynomial-area drawing in its
    interior.

  \subsection {Alpha Cuts}

    We now describe a useful property of the potential resolution of a convex
    polygon, namely that it can be ``distributed'' any way we want when cutting 
    the polygon into smaller parts.
    This will be made more precise later.
    We first make another observation about convex polygons.
    
    \begin {lemma} \label {lem:move-vertex}
      Let $P$ be a convex polygon, $v$ a vertex of $P$, $e$ an edge incident to $v
      $, and $\alpha \in (0, 1)$ a number. Let $P'$ be a copy of $P$ where $v$
      has been replaced by $v'$ by moving $v$ along $e$ over a fraction $\alpha$
      of the length of $e$. Then the potential resolution of $P'$ is at least $1 -
      \alpha$ times the potential resolution of $P$.
    \end {lemma}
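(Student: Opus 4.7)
My plan is to use Observation~\ref{obs:convex-easier} on both $P$ and $P'$ (the latter is convex since $v'$ lies on the edge $vw$ of $P$) so that each potential resolution is the minimum over vertices of the distance from the vertex to the line through its two neighbors. Write $r=\text{PR}(P)$, and let $u$ be the other neighbor of $v$, $u'$ the other neighbor of $u$, $w$ the far endpoint of $e$, and $w'$ the other neighbor of $w$; then $v' = (1-\alpha)v + \alpha w$. Only the three vertices $v', u, w$ of $P'$ can have a changed distance to their neighbor line, so I need to check that each of these three distances is at least $(1-\alpha)r$.

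For $v'$, since signed area is affine in each argument and $\text{area}(w,u,w)=0$, $\text{area}(v',u,w) = (1-\alpha)\,\text{area}(v,u,w)$, giving $d(v',\text{line}(u,w)) = (1-\alpha)\,d(v,\text{line}(u,w)) \geq (1-\alpha)r$. For $w$, the analogous vanishing of $\text{area}(w,w,w')$ gives $\text{area}(w,v',w') = (1-\alpha)A$ with $A = \text{area}(w,v,w')$ fixed, so the new distance at $w$ is $2(1-\alpha)A/|v'w'|$. The key technical step will be the bound $|v'w'| \leq (1-\alpha)|vw'| + \alpha|ww'|$, which I will obtain by expanding to $|v'w'|^2 = (1-\alpha)|vw'|^2 + \alpha|ww'|^2 - \alpha(1-\alpha)|vw|^2$ and observing that $[(1-\alpha)|vw'|+\alpha|ww'|]^2 - |v'w'|^2 = \alpha(1-\alpha)[|vw|^2 - (|vw'|-|ww'|)^2] \geq 0$ by the triangle inequality on the three points $v,w,w'$. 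Writing $|vw'| = 2A/h_1$, $|ww'| = 2A/h_2$ with $h_1 = d(w,\text{line}(v,w'))$ and $h_2 = d(v,\text{line}(w,w'))$, both of which are vertex-to-non-incident-edge distances in the complete graph on the vertices of $P$ and hence $\geq r$, this bound yields the new distance at $w$ as at least the weighted harmonic mean $(1-\alpha)/((1-\alpha)/h_1 + \alpha/h_2) \geq (1-\alpha)\min(h_1,h_2) \geq (1-\alpha)r$.

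The argument for $u$ runs parallel, with one twist: the area identity $\text{area}(u,u',v') = (1-\alpha)\,\text{area}(u,u',v) + \alpha\,\text{area}(u,u',w)$ does not simplify, and I need both signed areas to share a sign in order for their absolute values to add. This is the one place where I use convexity of $P$: since $uu'$ is an edge of $P$, all other vertices, in particular both $v$ and $w$, lie on the same side of the line through $u$ and $u'$. The same reverse-triangle-inequality bound applied now to the triangle $u',v,w$ gives $|u'v'| \leq (1-\alpha)|u'v| + \alpha|u'w|$, and substituting shows that $d(u,\text{line}(u',v'))$ is at least a weighted average of $d(u,\text{line}(u',v)) \geq r$ and $d(u,\text{line}(u',w)) \geq r$, hence $\geq r \geq (1-\alpha)r$. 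The main obstacle I foresee is spotting the inequality $|v'w'| \leq (1-\alpha)|vw'|+\alpha|ww'|$ (and its analogue for $|u'v'|$); once that is in hand the rest is algebraic bookkeeping.
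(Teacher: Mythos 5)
Your overall strategy is sound and, except for one step, the computations check out: reducing to the three affected vertices $v'$, $w$, $u$ via Observation~\ref{obs:convex-easier}, the signed-area identities, the bound $|v'w'|\le(1-\alpha)|vw'|+\alpha|ww'|$ (which, incidentally, is just the triangle inequality applied to $v'-w'=(1-\alpha)(v-w')+\alpha(w-w')$, so the Stewart-type expansion is unnecessary), and the weighted harmonic-mean estimate are all correct. This is a genuinely different execution from the paper, which argues synthetically (linear interpolation of the distance for the middle vertex, a rotation/no-local-minimum argument and a homothety for the end vertices); your harmonic-mean bound is actually more robust than the paper's monotonicity claims in the obtuse configurations discussed next.

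The gap is the assertion that $h_2=d(v,\mathrm{line}(w,w'))\ge r$ and, in the $u$ case, $d(u,\mathrm{line}(u',w))\ge r$, ``because these are vertex-to-non-incident-edge distances in the complete graph.'' The resolution of the complete graph lower-bounds distances from vertices to edges, i.e.\ to closed \emph{segments}; the distance to the supporting \emph{line} can be strictly smaller, namely whenever the foot of the perpendicular falls outside the segment, which happens exactly when the relevant angle is obtuse (e.g.\ when the interior angle of $P$ at $w$ exceeds $90^\circ$, the foot of the perpendicular from $v$ onto the line through $w$ and $w'$ lies beyond $w$). So $h_2\ge r$ does not follow from the definition of potential resolution, and it is not one of the terms in the minimum of Observation~\ref{obs:convex-easier} either, since $w$ and $w'$ are not the two neighbors of $v$. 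The inequality is in fact true, but it needs its own convexity argument, for instance: if the interior angle $\beta$ of $P$ at $w$ satisfies $\beta\ge 90^\circ$, then $vw'$ is the longest side of triangle $vww'$, so $h_2=2A/|ww'|\ge 2A/|vw'|=h_1\ge r$; if $\beta<90^\circ$, then convexity places $u$ inside the angle cone of $P$ at $w$, so $\angle vwu\le\beta<90^\circ$ and $h_2=|vw|\sin\beta\ge|vw|\sin\angle vwu=d(v,\mathrm{line}(u,w))\ge r$ by the Observation applied at $v$. A parallel argument is needed for $d(u,\mathrm{line}(u',w))\ge r$. Until these two inequalities are justified, your bounds at $w$ and at $u$ are not established.
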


    \begin {proof}
      Let $a, b, c$ be three vertices of $P$ in counterclockwise order, and
      consider the distance from $b$ to the line $\ell$ through $a$ and $c$, as in
      Figure~\ref {fig:ro-original}. By Observation~\ref {obs:convex-easier}, the
      potential resolution of $P$ is the minimum of this distance over all such
      triples of vertices.
      Now consider the situation in $P'$. If none of $a, b, c$ are equal to $v$,
      then clearly the distance in $P'$ is the same as in $P$.

      \vijfplaatjes {ro-original} {ro-case-br} {ro-case-al} {ro-case-ar}
      {ro-case-arcl} {Cases to consider with respect to resolution with
      respect to a convex polygon.}

      If $v = b$, then the line $\ell$ through $a$ and $c$ is still the same as in
      $P$, see Figure~\ref {fig:ro-case-br}. The edge $e$ must be either between $
      a$ and $v$ or between $v$ and $c$; in both cases, moving $v$ along $e$
      changes the distance to $\ell$ linearly. Therefore, the distance from $b'$
      to $\ell$ is exactly $(1 - \alpha)$ times the distance from $b$ to $\ell$,
      which is bounded by $(1 - \alpha)$ times the potential resolution of $P$.

      If $v = a$, then there are two subcases depending on whether $e$ is between
      $v$ and $b$ or between $v$ and its predecessor.
      If $e$ goes between $v$ and the predecessor of $v$, as in Figure~\ref
      {fig:ro-case-al}, then the line $\ell'$ between $v'$ and $c$ rotates around
      $c$ as $v'$ moves over $e$.
      Because $P$ is convex, the distance from $b$ to $\ell'$ as a function of the
      position of $v'$ has no local minimum, so when $v'$ is at one of the
      endpoints of $e$ the distance is smaller than at any interior point.
      Therefore, the distance from $b$ to $\ell'$ is bounded by the potential
      resolution of $P$.
      If $e$ goes between $v$ and $b$ itself, as in Figure~\ref {fig:ro-case-ar},
      then let $c'$ be the point on a fraction $\alpha$ along the edge between $c$
      and $b$. Then distance from $b$ to the line $\ell'$ through $a'$ and $c$ is
      clearly larger than the distance from $b$ to the line $\ell''$ through $a'$
      and $c'$, which is exactly $(1 - \alpha)$ times the distance from $b$ to the
      line $\ell$ through $a$ and $c$, see Figure~\ref {fig:ro-case-arcl}. So in
      this case the distance is again bounded by $(1 - \alpha)$ times the
      potential resolution of $P$.

      The case where $v = c$ is symmetric to the case where $v = a$.
    \end {proof}
    
    Let $P$ be a convex polygon. We will show that we can cut $P$ into two smaller
    polygons, ``distributing'' its potential resolution in any way we want.

    We define an $\alpha$-cut of $P$ to be a directed line $\ell$ that splits $P$
    into two smaller polygons, such that if an edge $e$ of $P$ is intersected by $
    \ell$, the length of the piece of $e$ to the left of $\ell$ is $\alpha$ times
    the length of $e$, and the piece of $e$ to the right of $\ell$ is $(1-\alpha)$
    times the length of $e$.
    For a given convex polygon and two features of its boundary (either vertices
    or edges), there is a unique $\alpha$-cut that cuts the polygon through those
    two features in order.

    \begin {lemma} \label {lem:alphacut}
      Suppose we are
      given a convex polygon $P$ of resolution $d$, two features (either vertices
      or edges) of $P$, and a fraction $0 < \alpha < 1$.
      Let $\ell$ be the $\alpha$-cut through the two given features that cuts $P$
      into a piece $P_l$ to the left of $\ell$ and a piece $P_r$ to the right of $
      \ell$.
      Then the potential resolution of $P_l$ is at least $\alpha d$ unless the two
      features are two adjacent edges that meet to the right of $\ell$.
      Similarly, the potential resolution of $P_r$ is at least $(1-\alpha)d$
      unless the two features are two adjacent edges that meet to the left of $
      \ell$.
    \end {lemma}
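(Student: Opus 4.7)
The plan is to apply Observation~\ref{obs:convex-easier} to $P_l$: its potential resolution equals the minimum, over consecutive triples $(a,b,c)$ of $P_l$, of the distance from $b$ to the line through $a$ and $c$, and I will bound each such distance by $\alpha d$. We begin with a sub-claim: for any three vertices $u,v,w$ of $P$, the distance from $v$ to the line through $u$ and $w$ is already $\geq d$. Indeed, the potential resolution of the point set $\{u,v,w\}$ is at least that of the vertex set of $P$ (since the former minimum ranges over fewer triples), and Observation~\ref{obs:convex-easier} applied to the triangle $uvw$ identifies its three heights as the quantities whose minimum equals that resolution.

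The proof proceeds by case analysis on the pair of features. In the vertex-vertex case, every vertex of $P_l$ is a vertex of $P$, so the potential resolution of $P_l$ is $\geq d$. If both features are adjacent edges sharing a vertex $v$ on the \emph{left} of $\ell$, then $P_l$ is the $\alpha$-scaling (centered at $v$) of the triangle on $v$ and its two $P$-neighbors; its three heights are $\alpha$ times those of the original triangle, each $\geq d$ by the sub-claim. In the remaining non-exception cases, $P_l$ has one or two ``new'' vertices on $\ell$, each of the form $y' = (1-\alpha)x + \alpha y$ for an edge $xy$ of $P$ with $x$ strictly on the left. Consecutive triples inherited from $P$ are bounded by $d$. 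A triple whose middle is a $P$-vertex $x$ with one new neighbor $y'$ is handled by Lemma~\ref{lem:move-vertex}: moving $y$ to $y'$ in $P$ along $yx$ by fraction $1-\alpha$ yields a polygon of potential resolution $\geq \alpha d$ in which the triple is consecutive. A triple $(q,v,y')$ whose middle is an unchanged $P$-vertex $v$ adjacent to the new vertex $y'$ is handled by a convex-combination argument: expanding $(v-q)\times(y'-q)$ and using $|y'-q|\leq (1-\alpha)|x-q|+\alpha|y-q|$ expresses $d(v,\text{line}(q,y'))$ as a weighted average of $d(v,\text{line}(q,x))$ and $d(v,\text{line}(q,y))$, both $\geq d$ by the sub-claim.

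The main obstacle is the ``doubly new'' triple $(w',y',x)$ with middle $y'$ arising in the edge-edge case, whose other neighbor $w' = (1-\alpha)z + \alpha w$ is itself a new vertex. Since the line through $w'$ and $x$ passes through the $P$-vertex $x$ and $y' - x = \alpha(y-x)$, we get $d(y',\text{line}(w',x)) = \alpha \cdot d(y,\text{line}(w',x))$, reducing the task to showing $d(y,\text{line}(w',x)) \geq d$. Expanding $(y-x)\times(w'-x) = (1-\alpha)(y-x)\times(z-x) + \alpha(y-x)\times(w-x)$, noting that the two cross products share a sign by convexity of $P$ (all vertices of $P$ lie on one side of the line through $x$ and $y$), and using $|w'-x|\leq (1-\alpha)|z-x|+\alpha|w-x|$, expresses $d(y,\text{line}(w',x))$ as a weighted average of $d(y,\text{line}(x,z))$ and $d(y,\text{line}(x,w))$, both $\geq d$ by the sub-claim. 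This argument degenerates precisely when $w = y$, forcing one weighted summand to vanish; equivalently, when the two edges are adjacent at a common vertex on the right of $\ell$, matching the stated exception. A mirror argument handles the symmetric triple $(z,w',y')$ with middle $w'$. The bound for $P_r$ follows by the symmetric argument, exchanging $\alpha$ with $1-\alpha$ and left with right.
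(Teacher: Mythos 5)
Your proof is correct, identifies the same exceptional configuration, and arrives at the same bounds, but it takes a noticeably more fine-grained route than the paper. The paper's argument is a single global perturbation: it builds a polygon $P'$ from $P$ by sliding, for each crossed edge, the endpoint on the far side of $\ell$ up to the crossing point, invokes Lemma~\ref{lem:move-vertex} for each slide, and then notes that the vertices of $P_l$ form a subset of the vertices of $P'$, so the potential resolution can only improve; the exception appears there as the impossibility of moving a single shared vertex to two new locations. You instead apply Observation~\ref{obs:convex-easier} directly to $P_l$ and bound every consecutive-triple height separately, anchored by the clean sub-claim that each vertex of $P$ lies at distance at least $d$ from the line through any two others, using Lemma~\ref{lem:move-vertex} only for triples centered at the retained endpoint of a cut edge and explicit cross-product/mediant computations for the triples involving new vertices (your doubly-new analysis also covers, a fortiori, the easier triple whose middle is the unique new vertex flanked by two old ones when one feature is a vertex). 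Your route is longer, but it buys something concrete: when both features are edges, applying Lemma~\ref{lem:move-vertex} twice in succession, as the paper does, literally certifies only $\alpha^2 d$ for a consecutive triple of $P'$ that involves both moved vertices, whereas your per-triple analysis incurs the factor $\alpha$ exactly once in each triple and therefore delivers the stated $\alpha d$ bound directly. The exceptional case falls out identically in both treatments; in yours it is the degeneration $w=y$ that annihilates one summand of the weighted average.
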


    \begin {proof}
      We will argue about the potential resolution of $P_l$; the argument for $P_r
      $ is symmetric.
      We prove this lemma by applying Lemma~\ref {lem:move-vertex} to the
      new vertices of $P_l$. If both features where $\ell$ cuts through $P$ are
      vertices, then all vertices of $P_l$ are also vertices of $P$ and clearly
      the potential resolution can only become better. However, if one or both of
      the features are edges, then $P_l$ has one or two new vertices that are not
      part of $P$.
      Figure~\ref {fig:alphacut-1+alphacut-2+alphacut-2-adjacent} shows three different cases that can occur.
      To solve this problem, we first alter $P$ to a different
      polygon $P'$ that has the new vertices. Let $u'$ be the place where $\ell$
      enters $P$ and $u$ the closest vertex below $\ell$ along the boundary to it
      (possibly $u = u'$), and similarly let $v'$ be the place where $\ell$ exits
      $P$ and $v$ the closest vertex below $\ell$. Now, we create $P'$ by moving
      $u$ to $u'$ and $v$ to $v'$. Clearly, both will move a fraction $1 - \alpha$
      along their edges, so by Lemma~\ref {lem:move-vertex} $P'$ has a potential
      resolution of at most $\alpha$ times the potential resolution of $P$.
      Therefore, the potential resolution of $P_l$ can only be larger.

      The only exception is when $u = v$; in this case we cannot move the vertex
      to two new places simultaneously, but we have to create two new vertices.
      Indeed, the result is not true in that case, since the two new vertices
      can be arbitrarily close to each other as $\alpha$ comes arbitrarily
      close to $1$, so the resolution of $P'$ cannot be expressed in terms of
      $\alpha$, as can be seen in Figure~\ref {fig:alphacut-2-adjacent}.
    \end {proof}

    \drieplaatjes {alphacut-1} {alphacut-2} {alphacut-2-adjacent}
    {(a) An $\alpha$-cut through a vertex and an edge. (b) An $\alpha$-cut through two non-adjacent
    edges. (c) An $\alpha$-cut through two adjacent edges.}

  \subsection {Combinatorial Embeddings}

    Let $G=(V,E)$ be a plane graph.
    That is, we consider the combinatorial structure of $G$'s
    embedding to be fixed,
    but we are free to move its vertices and edges around.
    Let $F$ be the set of faces of $G$, excluding the outer face.
    We make some definitions about faces.
    We say that a subset $F' \subset F$ induces a subgraph $\fsg {F'}$ of $G$
    that consists of all vertices and edges that are incident to the faces
    in $F'$.
    A subset $F' \subset F$ is said to be \emph {vertex-connected} if $\fsg {F'}$
    is connected; it is said to be \emph {edge-connected} if the dual graph
    induced by the dual vertices of $F'$ is connected. In other words, faces
    that share an edge are both edge-connected and vertex-connected, but faces
    that share only a vertex are only vertex-connected.

    We recall a lemma from~\cite{duncan},
    rephrased in terms of the faces of the
    graph:

    \begin{lemma}
    \label{lem:river}
      Given an embedded plane graph $G$ that is fully triangulated except for the
      external face and two edges $e_1$ and $e_2$ on that external face,
      it is possible to partition the faces of $G$ into three sets
      $F_1, F_2, R \subset F$
      such that:
      \begin{enumerate}
        \item All vertices of $G$ are in either \fsg {F_1} or \fsg {F_2}.
        \item $R$ is edge-connected and contains the faces incident to $e_1$
          and $e_2$.
        \item $F_1$ and $F_2$ are both vertex-connected.
        \item The edge-connected components of $F_1$ and $F_2$ all share an edge
          with the outer face of $G$.
      \end{enumerate}
    \end{lemma}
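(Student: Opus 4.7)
The plan is to take $R$ to be the set of triangles along a shortest path in the weak dual graph of the triangulated interior of $G$, from the interior face $f_1$ incident to $e_1$ to the interior face $f_2$ incident to $e_2$, and then let $F_1, F_2$ be the sets of remaining interior faces lying on the two sides of this ``river.''

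I would first set up the dual. Let $G^*$ be the graph whose vertices are the interior faces of $G$, with an edge between two faces whenever they share an edge of $G$. Since the interior of $G$ is a triangulated disk, $G^*$ is connected, so there is a shortest $G^*$-path $f_1 = t_0, t_1, \ldots, t_m = f_2$; set $R := \{t_0, \ldots, t_m\}$. Property~(2) is then immediate from the definition. The union of the triangles of $R$ together with $e_1$ and $e_2$ forms a (possibly vertex-pinched) closed planar region whose topological boundary consists of $e_1$, $e_2$, and two polygonal paths $\pi_1, \pi_2$ along the two sides of the river. Removing this region from $G$ leaves two pieces, bounded by $\pi_1$ together with one arc of the outer face and by $\pi_2$ together with the other; I let $F_1$ and $F_2$ be the interior faces lying in these two pieces, which together with $R$ partition $F$.

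It then remains to verify properties~(1), (3), and~(4). For~(1), every vertex of $G$ lies on $\pi_1$, $\pi_2$, or the outer cycle provided no vertex sits strictly inside the river region; such an enclosed vertex $v$ would have all its incident triangles in $R$, and rerouting the dual path through the opposite side of $v$'s link would strictly shorten $R$, contradicting minimality. The same style of shortcut argument handles~(4): any edge-connected component of $F_1$ (or $F_2$) that failed to share an edge with the outer face would be completely enclosed by $R$-faces on the edge side, and the enclosing dual cycle could be rerouted to give a strictly shorter $G^*$-path from $f_1$ to $f_2$. Once~(4) is established,~(3) follows by planarity: faces on the same side of $R$ are connected via edge-adjacencies within their side together with vertex-adjacencies at the pinch points of $\pi_1$ (or $\pi_2$), which is exactly what vertex-connectedness allows.

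The main obstacle I expect is making the shortcut arguments for~(1) and~(4) fully rigorous: each requires converting a topological enclosure statement (a vertex sitting inside the river, or a pocket of $F_1$ surrounded on all edge sides by $R$-faces) into a concrete alternate dual path from $f_1$ to $f_2$ that is strictly shorter than $R$. Formalizing this relies on the Jordan curve theorem combined with the triangulated structure, and special care is needed at pinch vertices where $\pi_1$ and $\pi_2$ touch each other without violating the shortest-dual-path hypothesis.
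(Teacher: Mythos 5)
The paper never proves this lemma: it is imported verbatim from Duncan et al.~\cite{duncan} and used as a black box, so there is no in-paper argument to measure yours against. Your shortest-dual-path construction is the natural route (and almost certainly the intended one), and your handling of property~(2) and of vertex-connectedness in~(3) via pinch points on the banks is fine. But the two shortcut arguments you flag as needing ``formalization'' have more than formalization problems.

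For property~(1), your criterion is wrong for boundary vertices: a vertex $v$ of the outer cycle is not in $\fsg{F_1}\cup\fsg{F_2}$ merely because it is not strictly inside the river --- if the dual path runs straight through the entire fan of $k$ interior triangles incident to $v$, then $v$ meets only faces of $R$, and no shortcut exists, since the fan of a boundary vertex is a dual \emph{path} (not a cycle), so traversing all $k$ of its triangles costs $k-1$ steps, exactly what any replacement inside the fan costs. (For a triangulated strip between $e_1$ and $e_2$ the unique dual path swallows every face and $F_1=F_2=\emptyset$, so property~(1) as literally stated fails; a correct proof has to either strengthen the hypotheses or read~(1) as ``no vertex lies strictly interior to $\bigcup R$,'' which is what your link-rerouting argument establishes for \emph{interior} vertices only.) For property~(4), rerouting the dual path \emph{through} an enclosed pocket $C$ is not a shortcut: a pocket with $q$ boundary edges and $z$ interior vertices contains $q+2z-2$ triangles, while the river may circumnavigate it in roughly $q$ steps, so the detour through $C$ can be arbitrarily longer than the subpath it replaces. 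What you actually need is that a shortest dual path cannot wind around anything --- e.g., that its two banks are simple, proved by shortcutting across the link of a pinch vertex rather than through the pocket --- and that is precisely the step left unsupplied. Finally, ``removing this region leaves two pieces'' is false in general: the river may touch the outer face at intermediate vertices and edges (the paper relies on exactly this when it applies the lemma), so $F_1$ and $F_2$ must be defined as the faces to the left and to the right of the directed dual path, each a union of several edge-connected components glued along bank vertices.
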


    Intuitively, $R$ is a path of faces that goes from $e_1$ to $e_2$ and that
    splits the remaining faces into two sets $F_1$ and $F_2$.

\section {Drawing a Graph with a Given Boundary}

  We are now ready to formally state the problem and describe the algorithm to
  solve it.

  \subsection {The Problem}

    Let $G$ be a triangulated planar graph with a given combinatorial embedding,
    and let $B$ be the cycle that bounds the outer face of $G$.
    Let $f$ be a map from a subset of the vertices of $B$ to points in the
    plane, such that these points are in convex position and their order along
    their convex hull is the same as their order along $B$.

    We say that a map $g$ from all vertices of $G$ to points in the plane
    \emph {respects} $f$ when:
    \begin {enumerate}
      \item The vertices mapped by $f$ are also mapped by $g$ to the same points;
      these define a convex polygon $P$.
      \item The remaining vertices of $B$ are mapped to the corresponding edges of $P$.
      \item The remaining vertices of $G$ are mapped to the interior of $P$.
      \item If all edges are drawn as straight line segments, they cause no
      crossings or incidences not present in $G$.
    \end {enumerate}

    An example of a respectful embedding was shown in Figure~\ref
    {fig:intro-input-G+intro-input-P+intro-output}.

    The input to our problem is a pair $(G, f)$.
    We will use the notations $B$ and $P$ as above. We will further define $F$
    to be the set of faces of $G$, excluding the outer face.
    We define $s = |F|$, the number of internal faces,
    to be the \emph {size} of the problem.
    We define $d$ to be the \emph {resolution} of the problem, which is the
    potential resolution of $P$ (recall, that is the resolution of the complete
    graph on the corners of $P$).
    Our goal is to compute a mapping $g$ that respects $f$ and such that the
    resolution of the embedded graph is bounded by some function of $s$ and $d$.

    Observe that it will not be possible to do so when there are any edges in $G$
    between two vertices that have to be on the same edge of $P$. Therefore, we
    call a problem \emph {invalid} if this is the case.
    We will show that for any
    valid problem, we can find an embedding with a polynomially bounded resolution.

  \subsection {The Main Idea}

    We want to solve the problem using divide-and-conquer.
    The idea is to divide $P$ into smaller convex polygons, and $F$ into smaller
    sets of faces, and map each subset of faces to one of the smaller regions.
    Then we need to decide which vertices of $G$ are mapped to the new corners
    of the smaller regions, and solve the subproblems.

    A first idea would be
    to find a path in $G$ between two vertices of $B$, and lay that out on a
    straight line, resulting in a split of $P$ into two smaller polygons and solve
    the two subproblems. There are two issues with this approach though. First, the
    vertices on the new straight line have to be placed the same way in the two
    subproblems, which means they are not independent. Second, if there are any
    chords on this path one of the subproblems will become invalid.

    To avoid these issues, we will not split along a single path, but along two
    paths next to each other. The region between these two paths, which we call a
    \emph {river}, has a controlled structure, which means that we can always
    complete the interior independently of how the vertices on the edges were
    placed. Furthermore, if these paths have any chords, we shortcut them along
    the chords and show how to deal with the added complexity of the river.
    Because the river may touch the boundary of $P$ in more places, the problem
    may be decomposed into more than two subproblems.
    Figure~\ref {fig:ex-river} shows an example instance,
    and Figure~\ref {fig:ex-reduced} shows
    a possible decomposition where some vertices on the boundary of $P$ have been
    fixed, and the paths between them are made straight.

    \drieplaatjes {ex-river} {ex-chords} {ex-reduced}
    {(a) A ``river'' (a path in the dual graph that does not reuse any vertices of
    the primal graph) between two edges on $B$. (b) The river banks have chords, and so 
    we include the area behind the chords in the river. (c) We fix the vertices on
    the river boundary that are on $B$, and draw the rest of the river boundary
    straight. This results in three smaller problems, plus the area of the river
    itself.}

    We assume the input is a valid problem with size $s$ and resolution $d$.
    We will keep as an invariant the ratio $d/s$, and show in
    \ifFull Section~\ref {sec:split} \else the next paragraph \fi
    how to
    subdivide a problem into smaller valid problems with the same (or better)
    ratio, plus an extra region (the river). We then recursively solve the
    independent subproblems, which results in a placement of all vertices
    that are not in the interior of the river. Finally, we show in
    \ifFull Section~\ref {sec:river} \else the paragraph after that \fi
    how to we place the vertices inside the river.

  \subsection {Splitting a Problem} \label {sec:split}

    Let $(G, P)$ be a valid problem of size $s$ and resolution $d$,
    and suppose that $P$ has at least four sides.

    Let $e_1$ and $e_2$ be two edges of $B$ that lie on two sides of $P$ that
    are not consecutive. Note that the endpoints of $e_1$ and $e_2$ are not
    necessarily fixed yet.
    Now, by Lemma~\ref {lem:river}, there exists a path of faces
    that connects $e_1$ to $e_2$, such that the boundary of this path does not
    have any repeated vertices. Let $R$ be the union of the faces of $F$ on
    this path. We call $R$ a \emph {river}; Figure~\ref {fig:ex-river} shows
    an example.
    This river may touch $B$ in other points than $e_1$ and $e_2$, so it can
    subdivide the faces of $F$ into any number of edge-connected subsets
    (apart from the river itself). We will assign a separate subproblem to
    each such edge-connected component.

    We would like to straighten the banks of the river, but this may lead to
    invalid subproblems if these banks have any chords. Therefore, we identify
    any chords that the river has
    (note that they can only appear on the outside of the river, since the river forms a dual path),
    and we add the faces of $F$ behind those chords to $R$.
    Similarly,
    if one of the paths touches a side of $P$ more than once, it would create a
    subproblem that would be flattened. To avoid that, we also incorporate such a
    region into the river (even though the straight side that lies alongside $P$
    is not necessarily a chord).

    Next, we count the numbers of faces in the river, as well as those
    in the parts outside the river. Then we fix the vertices where the river
    touches $P$ by cutting off the subproblems, using $\alpha$-cuts where $\alpha$
    is the fraction of faces inside the subproblem.
    Now, by Lemma~\ref {lem:alphacut},
    if we have a problem with parameters $s$ and $d$, we will
    construct subproblems with the same (or better) ratio $d/s$.
    Finally we straighten the new
    banks of the river, so that the subproblems have proper
    convex boundaries. Figure~\ref {fig:ex-river+ex-chords+ex-reduced} shows
    an example.

    \begin {lemma} \label {lem:divide}
      Given a valid problem $(F, P)$ where $P$ has at least four sides,
      We can subdivide $F$ and $P$ into disjoint sequences $F_1, F_2, \ldots, F_h$
      and $P_1, P_2, \ldots, P_h$ such that each $(\fsg {F_i}, P_i)$ is a valid
      subproblem with ratio $d/s$, and such that the remainders
      $F' = F \setminus \bigcup F_i$ and $P' = P \setminus \bigcup P_i$ have
      the following properties:
      \begin {enumerate}
        \item $F'$ and $P'$ also have ratio $d/s$.
        \item The vertices of $\fsg {F'}$ that are not vertices of
        $\fsg {\bigcup F_i}$ form internally 3-connected components that
        share at least two vertices of        $\fsg {\bigcup F_i}$ .
      \end {enumerate}
    \end {lemma}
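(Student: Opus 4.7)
My plan is to formalize the construction already sketched in the preceding paragraphs and verify each invariant in turn. I would begin by choosing two edges $e_1, e_2$ of $B$ lying on two non-consecutive sides of $P$ (which exist because $P$ has at least four sides), and invoking Lemma~\ref{lem:river} to obtain a dual path of faces between them whose boundary is vertex-disjoint. Call this set of faces $R_0$. I then grow $R_0$ to a maximal river $R$ by repeatedly absorbing into it (i) every face lying on the outer side of a chord of the current river boundary, and (ii) every component of $F \setminus R$ whose boundary meets only a single side of $P$. This process terminates because each step strictly enlarges $R$. Let $F_1, \ldots, F_h$ be the edge-connected components of $F \setminus R$; write $s_i = |F_i|$ and $s_R = |R|$, so that $s_R + \sum_i s_i = s$.

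Next I would construct the subpolygons $P_1, \ldots, P_h$ by applying a sequence of $\alpha$-cuts to $P$. By the absorption step, each $F_i$ is bounded by a single chord-free bank segment of $R$ together with a contiguous portion of $B$ lying between two successive river–boundary contacts; these two contacts are the two features of $P$ (vertex or edge) through which I take an $\alpha$-cut with $\alpha_i = s_i / s$. Applying Lemma~\ref{lem:alphacut} to the appropriate side of each such cut certifies that the resulting $P_i$ has potential resolution at least $\alpha_i d = s_i (d/s)$, so each subproblem has ratio at least $d/s$. A symmetric accounting on the complementary side of each cut, combined with the fact that the remaining river polygon $P'$ contains exactly $s_R$ faces, shows that $P'$ retains resolution at least $(s_R/s)\, d$, yielding property~1.

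Validity of each subproblem is where the chord- and single-side-absorption steps pay off. After straightening the bank side of $P_i$, any edge of $\fsg{F_i}$ with both endpoints on that straight side would have been a chord of the original river bank, contradicting the maximality of $R$; any edge with both endpoints on a single original side of $P$ was already forbidden by validity of $(G, P)$; and the side-absorption rule rules out stretches where a bank touched the same side of $P$ twice. The combinatorial embedding of $\fsg{F_i}$ inherited from $G$ is planar because $F_i$ is edge-connected. For property~2, note that in the fully triangulated graph $G$ each maximal chain of bank vertices strictly interior to $R$, together with its two endpoints lying in $\fsg{\bigcup F_i}$, inherits all the triangulating edges of the faces of $R$ that contain it; a standard argument shows such a triangulated strip is internally 3-connected in the sense required.

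The step I expect to be most delicate is the ratio accounting for the residual $P'$, since the cuts are applied one after another to a shrinking polygon and I have to make sure that the composition still leaves each $P_i$ with an $\alpha_i$ share and $P'$ with an $s_R/s$ share of the original resolution $d$, rather than of the resolution of some intermediate polygon. The cleanest way to handle this is to view the cuts as partitioning $P$ into $h+1$ pieces simultaneously and to apply Lemma~\ref{lem:alphacut} to each piece against the whole polygon $P$; the only thing to rule out is the ``adjacent edges meeting on the wrong side'' exception, which is prevented by having initially placed $e_1$ and $e_2$ on non-consecutive sides of $P$ and by the fact that successive river contacts on a single side were absorbed into $R$.
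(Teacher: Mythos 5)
Your proposal follows essentially the same route as the paper's proof: the same river construction with absorption of faces behind chords and of components touching only one side of $P$, the same $\alpha_i = s_i/s$ cuts certified by Lemma~\ref{lem:alphacut}, and the same appeal to the structure guaranteed by Lemma~\ref{lem:river} for the internally 3-connected components sharing two vertices with the river boundary. The one step you flag as delicate---the residual polygon's resolution under several cuts---is handled in the paper not by a simultaneous-partition view but by applying the cuts sequentially and using $d\prod_i(1-s_i/s) \ge d\bigl(1-\sum_i s_i/s\bigr) = s_R\, d/s$.
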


    \begin {proof}
      For the first part of the lemma, we need to show that the subproblems
      $(F_i, P_i)$ are valid and have a resolution/size ratio at least as
      good as $d/s$.
      First, we define the polygons $P_i$ by applying Lemma~\ref {lem:alphacut}
      to $P$ with $\alpha = s_i/s$ (that is, the fraction of
      faces in $F$ that is in $F_i$). The lemma ensures that the new polygons
      have potential resolution at least $d_i \geq \alpha d = s_i d/s$, so clearly
      $d_i / s_i \geq d / s$ as required.
      Second, recall that a subproblem is valid if it does not have any chords
      between two vertices that have to be drawn on the same side of $P_i$.
      For those sides of $P_i$ that are part of $P$, we already know there are
      no chords because $(G, P)$ was valid. For the new sides, we explicitly
      added all faces behind chords to the river $R = F'$.

      For the second part of the lemma, we need to show that $R$
      has the right ratio and that the internal vertices of the river form
      3-connected components that share two or more vertices with the boundary
      of the river.
      If we denote $s_R = |R|$ to be the size of the river and $d_R$ to be the
      potential resolution of the region $P'$ in which it is to be drawn, then
      by Lemma~\ref {lem:alphacut}, the potential resolution of $R$ after
      repeatedly slicing off subpolygons is at least $d_R = d\Pi (1-s_i/s)
      \geq d(1-\Sigma(s_i/s)) = s_R d / s$, so $d_R / s_R \geq d / s$.
      Finally, since we chose $R$ according to Lemma~\ref {lem:river}, all
      edge-connected pieces of $F$ outside $R$ share an edge with the outer
      face. In particular, this means that the boundary of $R$ does not touch
      itself and that any subgraphs sliced off by chords are 3-connected and
      share exactly two vertices with the boundary of $R$.
      Furthermore, if the boundary of $R$ touches the same side of $P$ multiple
      times, then the edge-connected components of $F$ between that side of $P$
      and $R$ are also 3-connected and share at least two vertices with the new
      boundary of $R$.
    \end {proof}

    When $P$ has only 3 sides, we cannot choose two edges $e_1$ and $e_2$ on
    non-adjacent sides of $P$. However, we can still use the same basic idea; 
    we just have to be careful because of the special case in Lemma~\ref
    {lem:alphacut}. So, let $c$ be a corner of $P$ and let $e_1$ and $e_2$
    be edges of $B$ on the sides incident to $c$. The lemma does not give a
    bound on the resolution of the region on the far side of $c$. So, let
    $e_1$ and $e_2$ be the edges furthest away from $c$. Since $P$ is a
    triangle, the two vertices of $B$ on the far side on $e_1$ and $e_2$
    are the other two corners of $P$, and they are joined by a side of $P$.
    This means that the region between the river and this side will be
    included into the river, and there will be no subproblem on the far
    side of $c$.

  \subsection {Fixing a River} \label {sec:river}

    \drieplaatjes {ff-input} {ff-bumps} {ff-flat}
    {(a) The interior of a river, after all vertices on its boundary have already
    been fixed by recursive calls to the split algorithm. (b) Because of
    the structure given by the river, we can identify small areas inside the
    river that we draw using the de Fraysseix-Pach-Pollack algorithm. (c) If we
    flatten the triangular drawings enough, every vertex is able to see every
    other one (in fact we need slightly more, namely that no visibility ray comes
    too close to any vertex). Note that in the figure the drawing is not flat
    enough for that, but otherwise the structure would become too hard to see.
    In fact, in this particular case no flattening at all would be required.}

    It remains to show how to place the interior vertices of a river after all vertices on its boundary have been placed recursively.
    Again, we are given a graph $G$ and a polygon $P$ that it has to be drawn in
    ($P$ is the boundary of the river, and $G$ is the part of the graph
    that has to be drawn in it),
    but there are two important differences with the initial problem:
    First, now we know that all vertices of $B$ have already been fixed
    (not only those on the corners but also those on the boundary of $P$).
    Second, we know that the remaining vertices of $G$ have a very specific
    structure, namely, they  form internally 2-connected components that
    share at least two vertices with the vertices fixed along an edge.
    Furthermore, since all fixed vertices have been placed using the algorithm
    above, they will never be closer than $d/n$ to each other.
    This means we can draw these components using the algorithm by de Fraysseix,
    Pach and Pollack~\cite{dpp}, and rotate and scale them to fit inside $P$. We
    can then flatten them more such that all remaining edges (between fixed
    vertices on the boundary of $P$ or vertices on the de Fraysseix-Pach-Pollack
    drawings) can be drawn with straight line segments.
    Figure~\ref {fig:ff-input+ff-bumps+ff-flat} shows an example.

    \begin {lemma} \label {lem:internal}
      Given a river placement problem, of size $s$ and resolution $d$, we can lay
      out the graph with a resolution of $\Omega(d/n^3)$.
    \end {lemma}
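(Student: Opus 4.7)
The plan is to draw each remaining internally 3-connected component of the river interior using the de Fraysseix--Pach--Pollack (FPP) algorithm~\cite{dpp}, to place each such drawing into a thin ``bump'' region flush against the river boundary at its anchor vertices, and then to route every remaining edge as a straight segment. Recall that FPP produces a planar straight-line embedding of any $n$-vertex plane triangulation on an $O(n) \times O(n)$ integer grid with unit resolution, which I will use as the starting point inside each bump.

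First I would enumerate the components. By the invariant established in Lemma~\ref{lem:divide}, each component meets the (already fixed) river boundary in at least two consecutive vertices lying on a single bank. For each such component, together with the bank path joining its extreme shared vertices, I would run FPP to obtain a canonical straight-line drawing, and then apply an affine transformation --- translation, rotation, uniform scaling along the bank, and compression in the perpendicular direction --- that sends the bank path of the FPP drawing to its already-fixed positions and squashes the rest of the drawing into a strip of height $h$ above the bank. Affine maps preserve collinearity and planarity, so no crossings are introduced \emph{within} a bump.

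The main obstacle is to choose $h$ so that (i) every ``cross-bump'' edge, i.e.\ a straight segment from a vertex outside some bump to a vertex inside it, stays away from the other bump vertices, and (ii) the final resolution is $\Omega(d/n^3)$. By the inductive invariant, the fixed boundary vertices have pairwise separation $\Omega(d/n)$, so each bump has base length $w = \Omega(d/n)$, and any cross-bump segment begins at perpendicular distance at least $\Omega(d/n)$ from the target bank while the bank itself has total extent $O(d)$. A short geometric calculation then shows that such a segment makes an angle of at least $\Omega(1/n)$ with the bank, so its perpendicular distance to any other bump vertex at horizontal separation $\Omega(w/n)$ from its endpoint is at least $\Omega(w/n^2) = \Omega(d/n^3)$. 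Setting $h = \Theta(d/n^2)$ and noting that compressing the FPP drawing in the perpendicular direction by a factor of $\Omega(h/n)$ yields in-bump vertical spacing $\Omega(h/n) = \Omega(d/n^3)$, the same $h$ works simultaneously for all $O(n)$ bumps, giving the claimed overall resolution of $\Omega(d/n^3)$.
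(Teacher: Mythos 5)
Your overall strategy is the same as the paper's: draw each interior component with the de Fraysseix--Pach--Pollack algorithm, flatten it into a thin strip against the bank it is attached to, and choose the flattening so that the straight segments among the remaining fixed vertices clear all bump vertices. Your quantitative accounting (base separation $\Omega(d/n)$ inherited from the recursive invariant, an FPP drawing at resolution roughly $d/n^2$, and a further factor of $n$ of squeezing to guarantee visibility, giving $\Omega(d/n^3)$) matches the paper's, and is in fact spelled out in more geometric detail than the paper bothers to give.

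The one step that does not survive scrutiny is anchoring the FPP drawing with a single affine map. The paper is explicit that a component may share \emph{more than two} fixed vertices with the river boundary --- this happens whenever the river touches a side of $P$ several times, not only when a piece is cut off by a chord --- and those vertices have already been pinned at positions determined by the recursive $\alpha$-cuts, with spacings you do not control. An affine transformation restricted to the bank line has only two degrees of freedom, so it can match two prescribed anchor points but not three or more at arbitrary positions; moreover, in a canonical FPP drawing the outer-face path other than the base edge is not even collinear, so there is no affine image of it that lies along the bank through the prescribed points. The paper sidesteps this by running the incremental shift construction \emph{starting from} the already-fixed boundary vertices as the base (adding the remaining $O(|V_i|)$ vertices one by one with $45^\circ$ edges above them), rather than post-composing a canonical drawing with an affine map. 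You should either adopt that formulation or argue separately that each component admits a planar straight-line drawing whose fixed vertices form a collinear base at their prescribed, unevenly spaced positions; as written, the affine-map step fails exactly in the multi-contact case that the decomposition of Lemma~\ref{lem:divide} produces.
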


    \begin {proof}
      The polygon $P$ that forms a river and the graph $G$ to be drawn in it
      are formed by subdividing a bigger problem according to Lemma~\ref
      {lem:divide}.
      Then, the vertices on the boundary of $P$ are placed during recursive calls
      to smaller problems. All vertices have to be corners of the polygons of at
      least one such subproblem, and by our invariant all these problems have
      ratio $d/s$, so the distance between any two fixed vertices cannot be
      smaller than $d/n$.
      Furthermore, Lemma~\ref {lem:divide} tells us that
      the vertices of $G$ can be grouped into
      a number of subsets $V_1, V_2, \ldots, V_h$ such that each $V_i$ is
      internally biconnected, and there is a
      sequence of at least two vertices in $V_i$ that is in $B$, so that have
      been fixed  on the boundary of $P$.
      Note that there will be exactly two if such a component came from a chord on
      the river bank, but there can be more if it came from the river touching a
      side of $P$ multiple times.
      Then we can place these subgraphs using the de Fraysseix-Pach-Pollack
      algorithm, starting from the vertices that are already fixed (at distance
      at least $d/n$) and adding the remaining $O (|V_i|)$ vertices one by one
      using $45^\circ$ edges. This results in a drawing with resolution
      which can be roughly bounded by $d/n^2$.
      Then, it is sufficient to squeeze them by a factor of $n$ to make sure
      that they do not block any potential edges, and a further factor 2
      to make sure that the tips of the de Fraysseix-Pach-Pollack drawings are in
      fact far enough away from these potential edges, guaranteeing a good
      resolution.
      This means the final resolution of the drawing is $\Omega(d/n^3)$.
    \end {proof}

  \subsection {Putting it Together}

    To conclude, Lemmas~\ref {lem:divide} and~\ref {lem:internal} together imply:

    \begin {theorem} \label {thm:fixeddrawing}
      Given a plane graph $G$ with $n$ vertices, a convex polygon $P$ with $k$ corners
      and potential resolution $d$,  and a map $f$ that maps $k$
      vertices on the outer face of $G$ to the $k$ corners of $P$,
      we can draw a $G$ in $P$ respecting $f$ using resolution $\Omega(d/n^3)$.
    \end {theorem}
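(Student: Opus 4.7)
The plan is to prove this by strong induction on the size $s$ of the problem, that is, the number of internal faces of $G$. The base case, when $s$ is a small constant, is immediate: for instance, when $G$ is a single triangle and $P$ is its triangular image under $f$, the drawing is forced and has resolution $\Omega(d)$.

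For the inductive step, assuming $P$ has at least four sides, I would apply Lemma~\ref{lem:divide} to partition the instance into smaller subproblems $(F_1,P_1),\ldots,(F_h,P_h)$ together with a residual river $(F',P')$, all having ratio at least $d/s$. Each subproblem has strictly fewer internal faces than the original, since the river is nonempty, so the inductive hypothesis gives a drawing of each $(F_i,P_i)$ with resolution $\Omega(d_i/n_i^3)$. These recursive drawings place every vertex of $G$ outside the interior of the river along the boundary of $P'$, and Lemma~\ref{lem:internal} then fills in the river interior with resolution $\Omega(d_R/n_R^3)$.

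The key step is to show that the minimum of these local resolutions across the entire recursion is $\Omega(d/n^3)$. Using the invariant $d_R/s_R \geq d/s$, together with $s = O(n)$ (because $G$ is triangulated) and the analogous $s_R = \Theta(n_R)$ for each river (it is bounded by a planar region whose interior is triangulated by the faces of $R$), Lemma~\ref{lem:internal} yields a resolution of $\Omega\bigl((d/n)\cdot s_R/n_R^3\bigr) = \Omega(d/n^3)$ for every river encountered in the recursion, and the analogous bound for subproblems follows inductively. Since every vertex of the final drawing lies in some river at some level of the recursion, the overall resolution equals this minimum, namely $\Omega(d/n^3)$.

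The main obstacle I anticipate is the case where $P$ has only three sides, because Lemma~\ref{lem:divide} requires two non-consecutive sides on which to pick $e_1$ and $e_2$. I would handle it using the workaround sketched after Lemma~\ref{lem:divide}: fix a corner $c$ of $P$ and choose $e_1$ and $e_2$ as the $B$-edges farthest from $c$ on the two sides incident to $c$, so that the region on the far side of $c$ is absorbed into the river and no degenerate subproblem is created. The ratio invariant continues to hold in this modified split, and the recursion proceeds unchanged. A final, routine step (not part of the resolution bound itself) is to convert the real-coordinate drawing produced by the recursion into integer coordinates of polynomial magnitude via Lemma~\ref{lem:ressize}.
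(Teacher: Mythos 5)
Your proposal is correct and follows essentially the same route as the paper: the paper's entire argument for Theorem~\ref{thm:fixeddrawing} is that Lemmas~\ref{lem:divide} and~\ref{lem:internal} together imply it, i.e.\ exactly the recursive split preserving the ratio $d/s$ followed by the river-filling step, with the triangle case handled by the workaround you cite. Your write-up is in fact more explicit about the induction and the accounting of resolutions across recursion levels than the paper itself.
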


    Note that by Lemma~\ref {lem:ressize}, we can rephrase this in terms of the
    more standard \emph {area} of a drawing when all coordinates are integer.

    \begin {corollary}
      Given a plane graph $G$ with $n$ vertices, a convex polygon $P$ with $k$ corners,
      at integer coordinates and diameter $D$, and a map $f$ that maps $k$
      vertices on the outer face of $G$ to the $k$ corners of $P$,
      we can draw the graph $G$ in a scaled copy $P'$ of $P$ that has diameter $O (D^4n^6)$, such that the drawing 
      respects $f$ and uses only integer coordinates for the vertices of $G$.
    \end {corollary}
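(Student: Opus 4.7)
The plan is to combine Theorem~\ref{thm:fixeddrawing} with Lemma~\ref{lem:ressize}, using the integer-coordinate hypothesis on $P$ to get a concrete lower bound on its potential resolution $d$. First, I would observe that since $P$ has integer vertex coordinates and diameter $D$, its potential resolution is at least $1/D$: by Observation~\ref{obs:convex-easier}, $d$ equals the minimum distance from a vertex $b$ of $P$ to the line $\ell$ through its two neighbors $a,c$; this distance is $2 \cdot \mathrm{Area}(abc)/|ac|$, where the numerator is at least $1$ (twice a nonzero integer area, by the shoelace formula) and the denominator is at most $D$.

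Next, I would apply Theorem~\ref{thm:fixeddrawing} to $(G,P,f)$, producing a straight-line drawing of $G$ inside $P$ with resolution $\Omega(d/n^3) = \Omega(1/(Dn^3))$ and diameter at most $D$. This drawing is the real-coordinate target, but it has sub-unit resolution, so it cannot yet be taken to integer coordinates directly.

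Then I would scale the entire drawing up uniformly by a factor $\Theta(Dn^3)$ so that its resolution becomes $\Omega(1)$ while its diameter grows to $O(D^2 n^3)$. This brings us into the setting of Lemma~\ref{lem:ressize}, which from a drawing of constant resolution and diameter $D'$ produces a combinatorially equivalent integer-coordinate drawing of diameter $O(D'^2)$. Applying it with $D' = O(D^2 n^3)$ yields an integer-coordinate drawing of diameter $O((D^2n^3)^2) = O(D^4 n^6)$, which is a drawing inside a similarly scaled copy $P'$ of $P$; because Lemma~\ref{lem:ressize} explicitly preserves the combinatorial structure of the outer face and rounds its vertices to integer points, $P'$ is the scaled $P$ up to the allowed perturbation, and $f$ is still respected.

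The only subtle point, and what I would flag as the main obstacle to a clean writeup, is that Lemma~\ref{lem:ressize} in general is allowed to slightly perturb the outer polygon when its vertices are not already at integer coordinates; here we must verify that starting from an integer polygon $P$ and scaling uniformly by an integer factor keeps the outer vertices integral at the moment Lemma~\ref{lem:ressize} is invoked, so no perturbation of $P'$ relative to the scaled $P$ is needed. Once this is checked, the diameter bound $O(D^4 n^6)$ follows immediately from composing the two scalings.
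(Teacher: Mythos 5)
Your proposal is correct and follows essentially the same route as the paper's own proof: lower-bound the potential resolution of the integer polygon by $1/D$ via the area of lattice triangles, apply Theorem~\ref{thm:fixeddrawing} to get resolution $\Omega(1/(Dn^3))$, scale up by $Dn^3$ to reach constant resolution and diameter $O(D^2n^3)$, and finish with Lemma~\ref{lem:ressize}. Your flagged concern about rounding the outer polygon is a fair observation, but it is already accommodated by the paper's remark that the rounding step may slightly perturb the prescribed polygon's shape.
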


    \begin {proof}
      First of all, if $P$ has only integer coordinate vertices and diameter $D$,
      then its potential resolution is at least $1/D$. To see this, consider
      a triangle formed by any three vertices of $P$: this triangle has area at
      least $1/2$, and in any direction its base is at most $D$ so its height
      must be at least $1/D$.

      Now, by Theorem~\ref {thm:fixeddrawing}, we can draw $G$ inside $P$ with
      resolution $\Omega(d/n^3) = \Omega (1/Dn^3)$. Then we can blow up the drawing by
      a factor $Dn^3$, which results in a polygon of diameter $D^2n^3$
      and at least constant resolution. By Lemma~\ref {lem:ressize}, there now also
      exists a drawing of $G$ in a polygon $P'$ of diameter $O (D^4n^6)$
      in which all vertices are drawn with integer coordinates.
    \end {proof}

\section {Application to Drawing Graphs of Genus $g$}

  As mentioned in the introduction,
  graphs of genus $g$ are often drawn in the plane by drawing their \emph {polygonal schema} in a
  prescribed convex polygon. Using a canonical polygon schema
  allows us to draw this outer face
  as a
  regular $4g$-gon that has some pairs of edges identified,
  and vertices on those edges
  duplicated.
  Given previous work by Duncan {\it et al.}~\cite{duncan},
  which gives us a chord-free polygonal schema derived from a graph
  $G$ combinatorially-embedded in a genus-$g$ surface, we can
  complete a straight-line drawing of $G$ using a given regular $4g$-gon as
  its external face, by
  applying Theorem~\ref {thm:fixeddrawing} and rounding the
  coordinates.
  Since a regular $4g$-gon with diameter $1$ has potential resolution $\Theta(1/g^2)$, this results in a
  drawing with resolution $\Omega(1/g^2n^3)$.

Of course, there is a
  slight issue with using a regular $4g$-gon:
  not every regular $k$-gon can be embedded with fractional
  coordinates.
  So, such a drawing will not fit exactly on an integer grid no matter how big the
  integers can be.
  Thus, we either have to allow for non-integer coordinates or allow
  for a slight (possibly imperceptible) perturbation of the vertex
  coordinates.

\section{Conclusion and Open Problems}
We have given an algorithm to draw any combinatorially-embedded
planar graph with a prescribed convex shape as its outer face and
polynomial area, with respect to the potential resolution of that
shape. That is, if the given convex shape has a polynomially-bounded
aspect ratio, then we can draw the graph $G$ in its interior using
polynomial area.
We have not made a strenuous attempt to optimize the exponent in this
area bound. So a natural open problem is to determine the upper and
lower bound limits of this function.

In addition, with respect to drawings of genus-$g$ graphs using a
canonical polygonal schema, although our construction guarantees that
copies of corresponding
vertices appearing in multiple boundary paths
will be drawn in the same relative order,
it does not guarantee that they will be drawn with the same
inter-path distances.
So another open problem is whether one can extend our algorithm to
draw such paths with matching inter-path distances for corresponding
vertices.

\section*{Acknowledgments}
This research was supported in part by the National Science
Foundation under grant 0830403, and by the
Office of Naval Research under MURI grant N00014-08-1-1015.

\raggedright
\bibliographystyle{abuser}
\bibliography{drawingggg}

\end{document}